
\documentclass[letterpaper,12pt]{article}
\usepackage{mathptmx}
\usepackage{graphics}

\usepackage[authoryear,comma,sectionbib]{natbib}

\usepackage{times}
\usepackage{bm}

\usepackage{mathrsfs}

\usepackage{graphicx}           
\usepackage{psfrag}
\usepackage{color}

\usepackage{amsbsy}
\usepackage{amssymb}
\usepackage{amsmath}
\usepackage{setspace}

\newtheorem{theorem}{Theorem}

\newtheorem{proof}{Proof}
\newtheorem{example}{Example}

\def\BE{\mbox{E}}             

\def\BP{\mbox{pr}}             

\def\d{\mathrm{d}}
\setcitestyle{comma}

\begin{document}
\doublespacing

\title{Approximate Bayesian computation (ABC) gives  exact results under the  assumption of model error}
\author{Richard D. Wilkinson$^*$\\
 School of Mathematical Sciences, University of Nottingham. NG7 2RD\\
r.d.wilkinson@nottingham.ac.uk}

\markboth{Richard D. Wilkinson}{Approximate Bayesian computation (ABC) gives  exact results under the   assumption of model error}

\maketitle

\begin{abstract}

Approximate Bayesian computation (ABC) or likelihood-free inference algorithms are used to find approximations to posterior distributions without making explicit use of the likelihood function, depending instead on simulation of sample data sets from the model. In this paper we show that under the assumption of the existence of a uniform additive model error term, ABC algorithms give exact results when sufficient summaries are used.  This interpretation  allows the approximation made in many previous application papers to be understood, and should guide the choice of metric and  tolerance  in future work.  ABC algorithms can be generalized by replacing the 0-1 cut-off with an acceptance probability that varies with the distance of the simulated data from the observed data. The acceptance density  gives the distribution  of  the error term, enabling the uniform error usually used to be replaced by a general distribution.
This generalization can also be applied to  approximate Markov chain Monte Carlo algorithms.
In light of this work, ABC algorithms can be seen as  calibration techniques for implicit stochastic models, inferring parameter values in light of the computer model, data, prior beliefs about the parameter values, and any measurement or model errors. 

\end{abstract}

{\bf Keywords: }
Approximate Bayesian computation; calibration; implicit inference; likelihood-free inference.

\section{Introduction}

Approximate Bayesian computation (ABC) algorithms are a group of methods for performing Bayesian inference without the need for explicit evaluation of the model likelihood function \citep{Beaumont_etal02, Marjoram_etal03, Sisson_etal07}. The algorithms can be used with implicit computer models \citep{Diggle_etal84} that generate sample data sets rather than return likelihoods. ABC methods have become popular in the biological sciences \citep{Sunnaker_etal2013} with applications  in genetics  \citep[see, for example, ][]{Siegmund_etal08, Foll_etal08}, epidemiology \citep{Blum_etal10, Tanaka_etal06} and population biology \citep{Ratmann_etal07, Hamilton_etal05, Cornuet_etal08} most common. 
This popularity is primarily due to the fact that the likelihood function, which can be difficult or impossible to compute for some models, is not needed in order to do inference. However, despite their popularity little is known about the quality of the approximation they provide beyond results shown in simulation studies.

In this paper we give a framework in which the accuracy of ABC methods can be understood. The notation throughout this paper is as follows.
Let $\theta$ denote the vector of unknown model parameters we wish to infer, and let $\eta(\cdot)$ denote the computer simulator. We assume $\eta(\cdot)$ is stochastic, so that the  simulator repeatedly run at $\theta$ will give a range of possible  outputs, and write $X \sim \eta(\theta)$ to denote that $X$ has the same distribution as the model run at $\theta$. To distinguish the model output from the observed data, let $D$ denote the observations. The aim is to calibrate the model to the data, in order to learn about the true value of the parameter. The Bayesian approach is to find the posterior distribution of $\theta$ given $D$, given by
\begin{equation*}
\pi(\theta\mid D) =\frac{\pi(D \mid \theta) \pi(\theta)}{\pi(D)}.
 \end{equation*}
Throughout this paper, $\pi(\cdot)$ is used to denote different probability densities, and $\pi(\cdot\mid \cdot)$  conditional densities, with the context   clear from the arguments. Above, $\pi(\theta)$ is the prior distribution, $\pi(D\mid \theta)$ is the likelihood of the data under the model given parameter $\theta$  (the probability distribution of $\eta(\theta)$), $\pi(\theta\mid D)$ is the  posterior distribution, and $\pi(D)$ is the evidence for the model.

It is usual in Bayesian inference to find that the normalizing constant $\pi(D)$ is intractable, and a wide range of Monte Carlo techniques have been developed to deal with this case \citep{Liu_01}. Doubly-intractable distributions are distributions which have a likelihood function $\pi(D\mid \theta)=q(D\mid \theta)/c(\theta)$ which is known only up to a normalizing constant, $c(\theta)$, which is intractable. Standard Monte Carlo techniques do not apply to  these distributions, and \citet{Murray_etal06} have developed algorithms which can be used in this case. ABC methods are Monte Carlo techniques developed for use with completely-intractable distributions, where the likelihood function $\pi(D \mid \theta)$ is not even known up to a normalizing constant. ABC algorithms, sometimes called likelihood-free algorithms, enable inference using only simulations generated from the model, and do not require any evaluation of the likelihood. The most basic form of the ABC algorithm is based on the rejection algorithm, and is as follows:
\begin{enumerate}
\item[]{\bf Algorithm A: approximate rejection algorithm}
\item[A1] Draw $\theta\sim \pi(\theta)$
\item[A2] Simulate $X$ from the simulator $X\sim \eta(\theta)$
\item[A3] Accept $\theta$ if $\rho(X,D)\leq \delta$.
\end{enumerate}
Here, $\rho(\cdot, \cdot)$ is a distance measure on the model output space, and $\delta$ is a tolerance determining the accuracy of the algorithm. Accepted values of $\theta$ are not from the true posterior distribution, but from an approximation to it, written $\pi(\theta\mid \rho(D,X)\leq \delta)$. When $\delta=0$ this algorithm is exact and gives draws from the posterior distribution $\pi(\theta\mid D)$, whereas as $\delta \rightarrow \infty$ the algorithm gives draws from the prior. While  smaller values of $\delta$ lead to  samples which better approximate the true posterior, they also lead to lower acceptance rates in step A3 than larger values, and so more computation must be done to get a given  sample size. Consequently, the tolerance $\delta$ can be considered as controlling a trade-off between computability and accuracy.

Several  extensions  have been made to the approximate rejection algorithm. If the data are high dimensional, then a standard change to the  algorithm is  to summarize the model output and data, using a summary statistic $S(\cdot)$ to project $X$ and $D$ onto a lower dimensional space. Algorithm A is then changed so that step A3 reads
\begin{enumerate}
 \item[A$3^\prime$]  Accept $\theta$ if $\rho(S(X),S(D))\leq \delta$.
\end{enumerate}
 Ideally, $S(\cdot)$ should be chosen to be a sufficient statistic for $\theta$. However, if the likelihood is unknown, then sufficient statistics cannot be identified. Summarizing the data and model output using a non-sufficient summary adds another layer of approximation  on top of that added by the use of the distance measure and tolerance, but again, it is not known what effect any given choice for $S(\cdot)$ has on the approximation.


In this paper it is shown that the basic approximate rejection algorithm  can be interpreted as performing exact inference in the presence of uniform model or measurement error. 
In other words, it is shown that ABC  gives exact inference  for the wrong model, and  we give a distribution for the model error term for whatever choice of metric and tolerance  are used. This interpretation allows us to show the effect a given choice of metric, tolerance and weighting have had in previous applications, and should provide guidance when choosing metrics and weightings in future work.
It is also shown that Algorithm A can be generalized to give inference under the assumption of a completely flexible form for the model error. We discuss how  to model the model error, and show how some models can be rewritten to give exact inference. \citet{Ratmann_etal07} explored a related idea, and looked at using ABC algorithms to diagnose model inadequacies. The aim here is not  to diagnose errors, but to account for them in the inference so as to provide posteriors that take known inadequacies into account, and to understand the effect of using standard ABC approaches.

Finally, ABC has been extended by \citet{Marjoram_etal03} from the rejection algorithm to approximate Markov chain Monte Carlo algorithms,  and by \citet{Sisson_etal07}, \citet{Toni_etal09},   and \citet{Beaumont_etal09} to approximate sequential Monte Carlo algorithms. We extend the  approximate Markov chain Monte Carlo algorithm to give inference for a general form of error, and   suggest methods for calculating Bayes factors and integrals for completely-intractable distributions.

\section{Interpreting ABC}

In this section a framework is described which enables  the effect a given metric and weighting  have in ABC  algorithms to be understood. This will then allow us to improve the inference by carefully choosing a metric and weighting which more closely represents our true beliefs. The key idea is to assume that there is a discrepancy  between the best possible model prediction and the data. This discrepancy  represents either measurement error on the data, or  model error describing our statistical beliefs about where the model is wrong. George Box famously wrote that  `all models are wrong, but some are useful', and in order to  link models to reality it is necessary to account for this model error when performing inference. In the context of deterministic models, this practice is well established \citep{Campbell_06, Goldstein_etal08,  Higdon_etal08}, and  should also be undertaken when linking stochastic models to reality, despite the fact that the variability in the model can seemingly explain the data as they are.

The framework introduced here uses the best input approach, similar to that given in \citet{Kennedy_etal01}. We assume that the measurement $D$ can be considered as a realization of the model run at its best input value, $\hat{\theta}$, plus an independent error term $\epsilon$
\begin{equation}\label{eqn:disc_model}
D=\eta(\hat{\theta})+\epsilon.
\end{equation}
The error $\epsilon$ might represent measurement error on $D$, or model error in $\eta(\cdot)$, or both, in which case we write $\epsilon=\epsilon_1+\epsilon_2$. Discussion about the validity of Equation~(\ref{eqn:disc_model}), and what $\epsilon$ represents and how to model it are delayed until Section 3, and for the time being we simply consider $\epsilon$ to have density $\pi_\epsilon(\cdot)$. 
The aim is to describe our posterior beliefs about the best input $\hat{\theta}$ in light of the error $\epsilon$, the data $D$, and prior beliefs about $\hat{\theta}$. Consider the following algorithm:

\begin{enumerate}
\item[]{\bf Algorithm B: probabilistic approximate rejection algorithm}
\item[B1] Draw $\theta\sim \pi(\theta)$
\item[B2] Simulate $X$ from the model $X\sim \eta(\theta)$
\item[B3] Accept $\theta$ with probability $\frac{\pi_\epsilon(D-X)}{c}$.
\end{enumerate}
Here, $c$ is a constant chosen to guarantee that $\pi_\epsilon(D-X)/c$ defines a probability. For most cases we will expect $\epsilon$ to have a modal value of 0, and so taking $c=\pi_\epsilon(0)$ will make the algorithm valid and also ensure efficiency by maximizing the acceptance rate. If summaries are involved, or if $D$ and $X$ live in non-comparable spaces, so that $D-X$ does not make sense, we can instead use any distribution relating $X$ to $D$, $\pi_\epsilon(D|X)$ instead.


The main innovation in this paper is to show that Algorithm B gives exact inference for the statistical model described above by Equation~(\ref{eqn:disc_model}). This is essentially saying that ABC gives exact inference, but for the wrong model.

\begin{theorem}\label{thm:1}
Algorithm B gives draws from the posterior distribution $\pi(\hat{\theta}\mid D)$ under the assumption that $D=\eta(\hat{\theta})+\epsilon$ and $\epsilon \sim \pi_\epsilon(\cdot)$ independently of $\eta(\hat{\theta})$.
\end{theorem}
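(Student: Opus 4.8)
The plan is to compute directly the distribution of the parameter values retained by Algorithm~B and check that it coincides with $\pi(\hat\theta\mid D)$. First I would write the joint density of a proposed pair $(\theta,X)$ produced by steps B1--B2, namely $\pi(\theta)\,\pi(X\mid\theta)$, where $\pi(X\mid\theta)$ denotes the density of $\eta(\theta)$. Step~B3 keeps such a pair with probability $\pi_\epsilon(D-X)/c$, a quantity depending on $\theta$ only through $X$ and otherwise only on the fixed data $D$ and the fixed constant $c$. Conditioning on the acceptance event and using Bayes' rule for it, the density of accepted pairs is therefore proportional to $\pi(\theta)\,\pi(X\mid\theta)\,\pi_\epsilon(D-X)$, the constant $c$ being harmless because it is absorbed into the overall normalization.

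Next I would integrate out $X$ to obtain the density of the accepted value of $\theta$, which is proportional to
\[
\pi(\theta)\int \pi(X\mid\theta)\,\pi_\epsilon(D-X)\,\d X .
\]
The decisive step is to identify this integral with the likelihood of $D$ under model~(\ref{eqn:disc_model}): since $D=\eta(\hat\theta)+\epsilon$ with $\epsilon\sim\pi_\epsilon(\cdot)$ drawn independently of $\eta(\hat\theta)$, the conditional density of $D$ given $\hat\theta=\theta$ is precisely the convolution $\int \pi(X\mid\theta)\,\pi_\epsilon(D-X)\,\d X$, which is $\pi(D\mid\theta)$. Substituting, the density of accepted values is proportional to $\pi(D\mid\theta)\,\pi(\theta)$, hence equal to $\pi(\hat\theta\mid D)$ once the proportionality constant is pinned down by the requirement that the density integrate to one, which forces it to be $\pi(D)$.

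I expect the main effort to be careful bookkeeping rather than any real difficulty. One should confirm that $\pi_\epsilon(D-X)/c$ is a genuine probability, which is exactly what the choice $c\geq\sup_u\pi_\epsilon(u)$ (for instance $c=\pi_\epsilon(0)$ when the error is unimodal at the origin) guarantees, and that the marginalizing integral is finite, which holds because $\pi(D\mid\theta)$ is itself a density. The non-Euclidean case, in which $D-X$ is read as $\rho(D,X)$ and sums replace integrals, goes through verbatim with counting measure in place of Lebesgue measure. Finally, specializing $\pi_\epsilon$ to a uniform density reduces step~B3 to the hard-threshold rule of step~A3, so the identical computation shows that the basic approximate rejection algorithm performs exact inference for the model with uniform additive error, giving the interpretation announced in the introduction.
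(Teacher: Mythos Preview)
Your proposal is correct and follows essentially the same route as the paper: both compute the density of accepted $\theta$ as $\pi(\theta)\int\pi_\epsilon(D-x)\pi(x\mid\theta)\,\d x$ up to normalization, and then identify that integral with the likelihood $\pi(D\mid\hat\theta)$ under the additive-error model~(\ref{eqn:disc_model}). The only cosmetic difference is that the paper conditions on $\theta$ and integrates over $x$ to get the acceptance probability $\BP(I=1\mid\theta)$ before applying Bayes' rule, whereas you carry the joint $(\theta,X)$ through the accept/reject step and marginalize $X$ afterwards; the computations are otherwise identical.
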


\begin{proof}
Let 
\begin{equation*}
I=\begin{cases} &1 \mbox{ if } \theta \mbox{ is accepted}\\
&0 \mbox{ otherwise.}\end{cases}
 \end{equation*}
We then find that 
\begin{align*}
\BP(I=1\mid \theta)&=\int \BP(I=1\mid \eta(\theta)=x, \theta)\pi(x\mid \theta)\d x\\
&=\int \frac{\pi_\epsilon(D-x)}{c} \pi(x\mid \theta) \d x.
\end{align*}
This gives that the distribution of accepted  values of $\theta$ is
\begin{equation*}
\pi(\theta\mid I=1)=\frac{\pi(\theta) \int \pi_\epsilon(D-x) \pi(x\mid \theta) \d x}{\int \pi(\theta') \int \pi_\epsilon(D-x) \pi(x\mid \theta') \d x \d \theta'}.
\end{equation*}
To complete the proof we must find the posterior distribution of the best model input $\hat{\theta}$ given the data $D$ under the assumption of  model error. Note that $\pi(D\mid \eta(\hat{\theta})=x)=\pi_\epsilon(D-x)$  which implies that the likelihood of $\theta$ is
\begin{align*}
\pi(D\mid \hat{\theta}) &=\int \pi(D\mid \eta(\hat{\theta})=x, \hat{\theta }) \pi(x\mid \hat{\theta}) \d x\\
&=\int \pi_\epsilon(D-x) \pi(x\mid \hat{\theta}) \d x.
\end{align*}
Consequently, the posterior distribution of $\hat{\theta}$ is
\begin{equation*}
\pi(\hat{\theta}\mid D)=\frac{\pi(\hat{\theta}) \int \pi_\epsilon(D-x) \pi(x\mid \hat{\theta}) \d x}{\int \pi(\theta) \int \pi_\epsilon(D-x) \pi(x\mid \theta) \d x \d \theta}
\end{equation*}
which matches the distribution of accepted values from Algorithm B.
\end{proof}

To illustrate the algorithm, we consider the toy example used in \citet{Sisson_etal07} and again in \citet{Beaumont_etal09} where analytic expressions can be calculated for the approximations.

\begin{example}

Assume the model is a mixture of two normal distributions with a uniform prior for the mean:
\begin{equation*}
\eta(\theta)\sim \frac{1}{2} \mathcal{N}(\theta,1)+\frac{1}{2} \mathcal{N}(\theta,\frac{1}{100}), \; \theta\sim \mathcal{U}[-10,10].
\end{equation*}
Further assume that we observe $D=0$, but that there is measurement error $\epsilon$ on this data. If $\epsilon \sim \mathcal{U}[-\delta, \delta]$, which is the assumption made when using Algorithm A with $\rho(x, 0)=| x| $, then it is possible to show that  the approximation is
\[
\pi(\theta\mid \epsilon \sim \mathcal{U}[-\delta, \delta], D=0)\propto \Phi\left(\delta-\theta\right)-\Phi(-\delta-\theta)+\Phi(10(\delta-\theta))-\Phi(-10(\delta+\theta))
\]
for $\theta\in[-10,10]$, where $\Phi(\cdot)$ is the standard Gaussian cumulative  distribution function. An
alternative to assuming uniform error, is to suppose that the error has a normal distribution $\epsilon\sim \mathcal{N}(0, \delta^2/3)$. It can then be shown that the posterior distribution of $\theta$ is
\[
\pi(\theta\mid \epsilon\sim \mathcal{N}(0, \frac{\delta^2}{3}), D=0)\propto \frac{1}{2}\phi(\theta; 0, 1+\frac{\delta^2}{3})+ \frac{1}{2}\phi(\theta; 0, \frac{1}{100}+\frac{\delta^2}{3})
\]
truncated onto $[-10, 10]$. This is the approximation found when using Algorithm B with a Gaussian acceptance kernel, where $\phi(\cdot; \mu, \sigma^2)$ is the probability density function of a Gaussian distribution with mean $\mu$ and variance $\sigma^2$. The value of the variance, $\delta^2/3$, is chosen to be equal to the variance of a $\mathcal{U}[-\delta, \delta]$ random variable.   For large values of the tolerance $\delta$, the difference between the two approximations  can be significant (see Figure~\ref{fig:toy_example}), but in the limit as $\delta$ tends to zero, the two approximations will be the same, corresponding to zero error.

\end{example}

\begin{figure}
\psfrag{y}[][][0.75]{Density}
\psfrag{x}[][][0.75]{$\theta$}
\psfrag{delta 0.1}[][][0.75]{$\delta=0.1$}
\psfrag{delta 1}[][][0.75]{$\delta=1$}
\includegraphics[width=\textwidth]{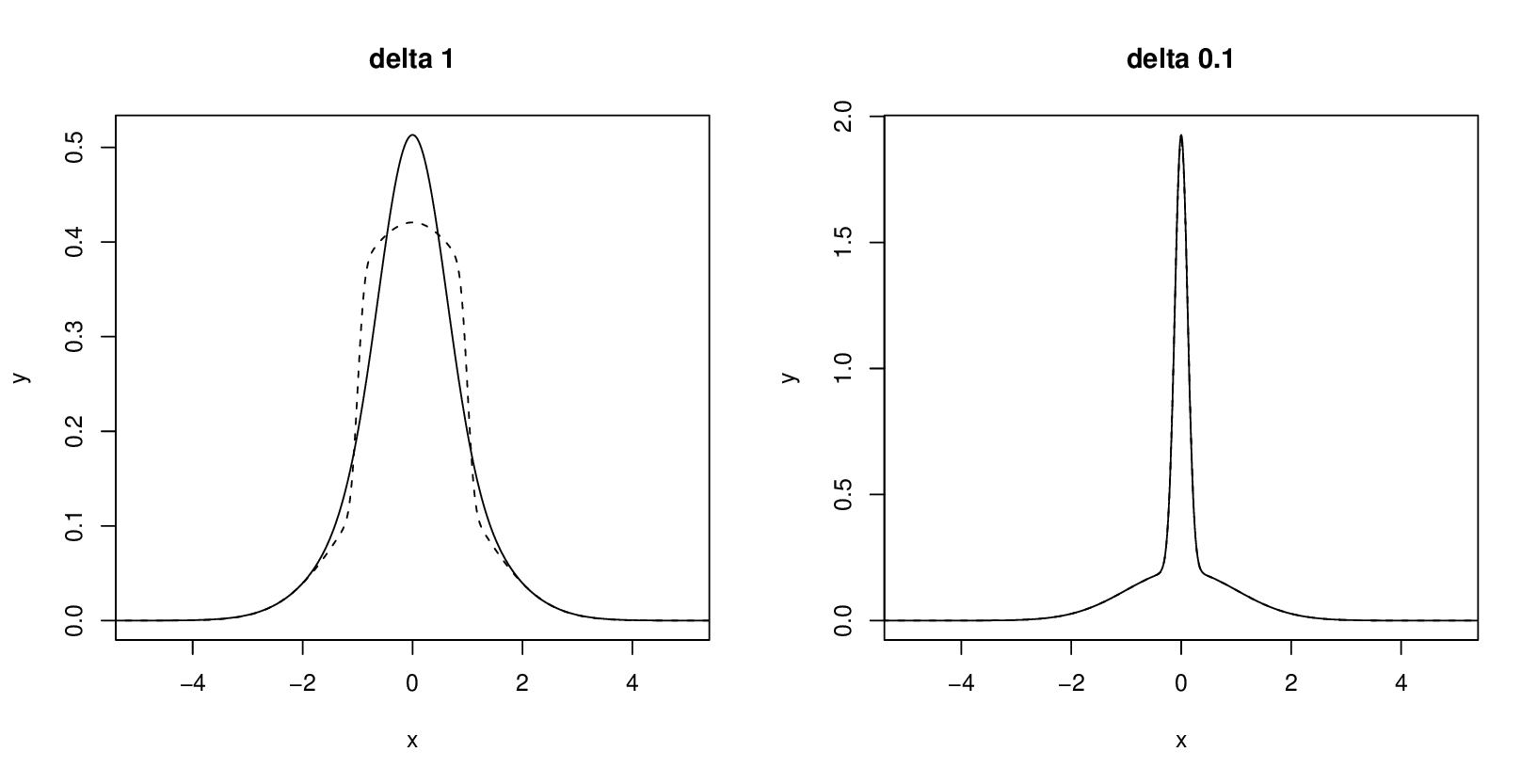}
\caption{The posterior distributions found when using Algorithm A (solid line) and Algorithm B (dashed line) with a Gaussian acceptance kernel. The left plot is for $\delta=1$ and the right plot for $\delta=0.1$.  The two curves are indistinguishable in the second plot. }
\label{fig:toy_example}
\end{figure}

\section{Model discrepancy}

The interpretation of ABC given  by Theorem 1 allows us to revisit previous analyses done using the  ABC algorithm, and to understand the approximation in the posterior in terms of the distribution  implicitly assumed  for the error term. If the approximate rejection algorithm (Algorithm A)  was used to do the analysis, we can see that this is equivalent to using the acceptance probability 
\[
\frac{\pi_\epsilon(r)}{c}=\begin{cases} 1 \mbox{ if } \rho(r) \leq \delta\\ 0 \mbox{ otherwise}\end{cases}
\]
 where $r$ is the distance between the simulated and observed data.
This says that Algorithm A gives exact inference for the model which assumes a uniform measurement error on the region defined by the 0-1 cut-off,  i.e., 
\[
\epsilon \sim \mathcal{U}\{x: \rho(x,D)\leq \delta\}.
\]
If $\rho(\cdot, \cdot)$ is a Euclidean metric, $\rho(D,x)=(x-D)^T(x-D)$, this is equivalent to assuming uniform measurement error on a ball of radius $\delta$ about $D$. 
In most situations, it is likely to be a poor choice for a model of the measurement error, because the tails of the distribution are short, with zero mass outside of the interval $[-\delta, \delta]$.

There are two ways we can choose to view the  error term; either as measurement error or model error.
Interpreting $\epsilon$ to represent measurement error is relatively straight forward, as scientists usually hold beliefs about the distribution and magnitude of measurement error on their data. For 
 most problems, assumptions of uniform measurement error will be inappropriate, and so  using Algorithm A with a 0-1 cut-off will be inappropriate.  But we have shown how to replace this uniform assumption with a distribution which  more closely represents the beliefs of the scientist. Although the distribution of the measurement error will often be  completely specified by the scientist,  for example zero-mean Gaussian error with known variance, it is possible to include unknown parameters for the distribution of $\epsilon$ in $\theta$ and infer them along with the model parameters.  Care needs to be taken to choose the constant $c$ so that the acceptance rate in step B3 is less than one for all values of the parameter, but other than this it is in theory simple to infer error parameters along with the model parameters. So for example, if $\epsilon \sim \mathcal{N}(0, \sigma^2)$, where $\sigma^2$ is unknown, we could include  $\sigma^2$ in $\theta$. 
  
Some models  have sampling or measurement error  built into the computer code so that the model output includes a realization of this noise. Rather than coding the noise process into the model, it will sometimes be possible to rewrite the model so that it outputs the latent underlying signal. If the likelihood of the data given the latent signal is computable (as it often is), then it may be possible to analytically account for the noise with the acceptance probability $\pi_\epsilon(\cdot)$. ABC methods have proven most popular in fields such as genetics, epidemiology, and population biology, where a common occurrence is to have  data generated by sampling  a hidden underlying tree structure. In many cases, it is the partially observed tree structure which causes the likelihood to be intractable, and given the underlying tree the sampling process will have a known distribution. If this is the case (and if computational constraints allow), 
 we can use the probabilistic ABC algorithm to do the  sampling to give exact inference without any assumption of model error. Note that if the sampling process gives continuous data, then exact inference using the rejection algorithm would not be  possible, and so this approach has the potential to give a significant improvement over current methods.

\begin{example}\label{example:BP}
To illustrate the idea of rewriting the model in order to do analytic sampling, we describe a  version of the problem considered in \citet{Plagnol_etal04} and \citet{Wilkinson_etal09}. Their aim was to use the primate fossil record to date the divergence time of the primates. They used an inhomogeneous branching process to model speciation, with trees rooted at time $t=\tau$, and simulated forwards in time to time $t=0$, so that the depth of the tree, $\tau$, represents the divergence time of interest. The branching process is parametrized by $\lambda$, which can either be estimated and fixed, or treated as unknown and given a prior distribution. Time is split into geologic epochs $\tau<t_{k}<\cdots<t_1<0$, and the data consist of counts of the number  of primate species that have been found in each epoch of the  fossil record, $D=(D_1, \ldots, D_{k})$. Fossil finds are modelled by a discrete marking process on the tree, with each species having equal probability $\alpha$ of being preserved as a fossil in the record.
If we let $N_i$ be the cumulative number of branches that exist during any point of epoch $i$, then the model used for the fossil finds process can be written as
$ D_i \sim \mbox{Binomial}(N_i, \alpha)$.
The distribution of $N=(N_1, \ldots, N_{14})$  cannot be calculated  explicitly and so we cannot use a likelihood based approach to find the posterior distribution of the unknown parameter $\theta=(\lambda, \tau, \alpha)$. The  ABC approach used in \citet{Plagnol_etal04} was to draw a  value of $\theta$ from its prior, simulate a sample tree and fossil finds, and then count the number of simulated fossils in each epoch to find a simulated value of the data $X$. They then accepted $\theta$ if $\rho(D, X)\leq \delta$ for some metric $\rho(\cdot, \cdot)$ and tolerance $\delta$. This gives an approximation to the posterior distribution of the parameter given the data and the model, where the approximation can be viewed as model or measurement error.

However, instead of approximating the posterior, it is possible in theory to rewrite the model and perform the sampling analytically to find the exact posterior distribution:
\begin{enumerate}
\item Draw $\theta=(\lambda,p,\alpha) \sim \pi(\cdot)$
\item Simulate a tree $\mathcal{T}$ using parameter $\lambda$ and  count $N$
\item Accept $\theta$ with probability $\prod_{i=1}^{k}\binom{N_i}{D_i} \alpha^{D_i} (1-\alpha)^{N_i-D_i}$.
\end{enumerate}
This algorithm gives exact draws from the posterior distribution of $\theta$ given $D$, and in theory there is no need for any assumption of measurement error.
Note that $\theta$ can include  parameter $\alpha$ for the sampling rate, to be inferred along with the other model parameters. However, this  makes finding a normalizing constant in step 3 difficult. Without a normalizing constant to increase the acceptance rate, applying this algorithm directly will be slow for many values of $D$ and $k$ (the choice of prior distribution and number of parameters we choose to include in  $\theta$ can also have a significant effect on the efficiency). A practical solution would be to add an error term and assume the presence of measurement error on the data (which is likely to exist in this case), in order to increase the acceptance probability in step 3. Approaching the problem in this way, it is possible to carefully model the error on $D$ and improve the estimate of the divergence time.
\end{example}

Using $\epsilon$ to represent measurement error is straight forward, whereas using $\epsilon$ to  model the  model discrepancy (to account for the fact the model is wrong) is  harder to conceptualize and not as commonly used. For deterministic models, the idea of using a model error term when doing calibration or data assimilation is well established and described for a Bayesian framework in \citet{Kennedy_etal01}. They assume that the model run at its `best' input, $\eta(\hat{\theta})$, is sufficient for the model when estimating $\hat{\theta}$. In other words, knowledge of  the model run at its best input  provides all the available  information about the system for the purpose of prediction. If this is the case, then we can  define $\epsilon$ to be the difference between $\eta(\hat{\theta})$ and $D$, and assume $\epsilon$ is independent of $\eta(\hat{\theta})$. Note that the error is the difference between the data and the model run at its best input, and does not  depend on $\theta$. If we do not include an error term $\epsilon$, then  the best input is the value of $\theta$ that best explains the data, given the model.  
When we include an error term which is  carefully modelled and represents our  beliefs about the discrepancy between $\eta(\cdot)$ and reality, then it can be argued that $\hat{\theta}$ represents the `true' value of $\theta$, and that $\pi(\hat{\theta}\mid D, \epsilon \sim \pi_\epsilon(\cdot))$ should be our posterior distribution for $\hat{\theta}$ in light of the data  and the model.

For  deterministic models, \citet{Goldstein_etal08} provide a framework to help think about the model discrepancy. To specify the distribution of $\epsilon$, it can help to break the discrepancy down into various parts: physical  processes not modelled, errors in the specification of the model, imperfect implementation etc. So for example, if $\eta(\cdot)$ represents a global climate model predicting average temperatures, then common model errors could be not including processes such as clouds, CO$_2$ emissions from vegetation etc., error in the specification might be using an unduly simple  model of economic activity, and imperfect implementation would include using grid cells too large to accurately solve the underlying differential equations.  
In some cases it may be necessary to consider model and measurement error, $\epsilon +e$ say, and model each process separately. For stochastic models, as far as we are aware, no guidance exists about how to model the error, and indeed it is not clear what $\epsilon$ should represent.

To complicate matters further, for many models the dimension of $D$ and $X$ will be large, making it likely that the acceptance rate  $\pi_\epsilon(X-D)$ will be small. As noted above, in this case it is necessary to summarize the model output and the data using a  multidimensional summary $S(\cdot)$. Using a summary
means that rather than approximating $\pi(\theta\mid D)$, the algorithms approximate $\pi(\theta\mid S(D))$. The interpretation of $\epsilon$ as model or measurement error still holds, but now the error is on the measurement $S(D)$ or the model prediction $S(X)$. If each element of $S(\cdot)$ has an interpretation in terms of a physical process, this may make it easier to break the error down into independent components. For example, suppose that we use $S(x)=(\bar{x}, s_{xx})$, the sample mean and variance of $X$, and that we then use the following acceptance density 
\[
\pi_\epsilon(S(X)-S(D))=\pi_1(\bar{X}-\bar{D})\pi_2(s_{XX}-s_{DD}).
\]
This is equivalent to assuming that there are two sources of model error. Firstly,  the mean prediction is assumed  to be wrong, with the error distributed with density $\pi_1(\cdot)$. Secondly, it assumes that the model prediction of the  variance is wrong, with the error having distribution $\pi_2(\cdot)$. It also   assumes  that the error in the mean prediction is independent of the error in the variance prediction. This independence is not necessary, but  helps with visualization and elicitation. For this reason it can be helpful to choose the different components of $S(\cdot)$ so that they are close to  independent (independence may also help increase the acceptance rate). Another possibility for choosing $S(\cdot)$ is to use principal component analysis (if $\dim(X)$ is large) to find a smaller number of uncorrelated summaries of the data which may  have meaningful interpretations. In general however, it is not known how to choose good summaries.  \citet{Joyce_etal08} have suggested a method for selecting between different summaries and for deciding how many summaries it is optimal to include. However, more work is required to find summaries which are informative,  interpretable  and for which we can describe the model error.

Finally,  once we have specified a distribution for $\epsilon$, we may find the acceptance rate is too small to be practicable and that it is necessary to compromise (as in Example~\ref{example:BP} above). A pragmatic way to increase the acceptance rate is to use a more disperse  distribution for $\epsilon$. This moves us from the realm of using $\epsilon$ to model an error we believe exists, to using it to approximate the true posterior. This is currently how most  ABC methods are  used. However,  even when making a pragmatic compromise,  the interpretation of the approximation in terms of an error will allow us to think more carefully about how to choose between different compromise solutions.

\begin{example}\label{example:demog}
One of the first uses of an ABC algorithm was by \citet{Pritchard_etal99}, who used a simple stochastic model to study the demographic history of the $Y$ chromosome, and used an approximate rejection algorithm to infer mutation and demographic parameters for their model. Their data consisted of 445 Y chromosomes sampled at eight different loci from a mixture of populations from around the world, which they summarized by just three statistics: the mean (across loci) of the variance of repeat  numbers $V$, the mean effective heterozygosity $H$, and the number of distinct haplotypes $N$. The observed value of the summaries  for their sample was $D\equiv (V, H, N)^T=(1.149, 0.6358, 316)^T$.  They elicited prior distributions for the mutation rates from the literature, and used diffuse priors for population parameters such as the growth rate and the effective number of ancestral Y chromosomes.  Population growth was modelled using a standard coalescent model growing at an exponential rate from a constant ancestral level, and various different mutation models were used to simulate sample values for the three summaries measured in the data. They then applied Algorithm A using the metric 
\begin{equation}
\rho(D,X)=\prod_{i=1}^3 \frac{D_i-X_i}{D_i}
\label{eqn:Pritchard}
\end{equation}
where $X$ is a triplet of simulated values for the three summaries statistics. They used a tolerance value of $\delta=0.1$, which for their choice of metric corresponds to an error of 10\% on each measurement. This gives results equivalent to assuming that there is independent uniform measurement error on the three data summaries, so that the true values of the three summaries have the following distributions
\[
 V\sim\mathcal{U}[1.0341, 1.2624], \;H\sim \mathcal{U}[0.58122, 0.71038], \;N\sim\mathcal{U}[284, 348].
\]
\citet{Beaumont_etal02} used the same model and data set to compare the relative performance of Algorithm A with an algorithm similar to Algorithm B, using an Epanechnikov density applied to the metric value (\ref{eqn:Pritchard}) for the acceptance probability $\pi_\epsilon(\cdot)$. They set a value of $\delta$ (the cut-off in Algorithm A and the range of the support for $\epsilon$ in Algorithm B) by using a quantile $P_\delta$ of the empirical distribution function of simulated values of $\rho(D,X)$, i.e., $P_{0.01}$ means they accepted the 1\% of model runs with values closest to $D$. They concluded that Algorithm B gives more accurate results than Algorithm A, meaning  that the  distribution found using Algorithm B is closer to the posterior  found when assuming no measurement error ($\delta=0$). 

The conclusion that Algorithm B is preferable to Algorithm A for this model is perhaps not surprising in light of what we now know, as it was not taken into account that both algorithms used the same value of $\delta$. For Algorithm A this corresponds to assuming a measurement error with variance $\delta^2/3$, whereas using an Epanechnikov acceptance probability is equivalent to assuming a measurement error with variance $\delta^2/5$. Therefore, using Algorithm B uses measurement error only 60\% as variable as that assumed in Algorithm A, and so it is perhaps not surprising that Algorithm B gives more accurate results in this case.

\end{example}

\section{Approximate Markov chain Monte Carlo}

For problems which have a tightly constrained posterior distribution (relative to the prior), repeatedly drawing independent values of $\theta$ from its prior distribution in the rejection algorithm can be inefficient. For problems with a high dimensional $\theta$ this inefficiency is likely to make the application of a rejection type algorithm impracticable. The idea behind Markov chain Monte Carlo (MCMC) is to build a Markov chain on $\theta$ and correlate successive observations so that more time is spent in regions of high posterior probability. Most MCMC algorithms, such as the Metropolis-Hastings algorithm, depend on knowledge of the likelihood function which we have assumed is not known.  \citet{Marjoram_etal03}  give an approximate version of the Metropolis-Hastings algorithm, which approximates the acceptance probability by using simulated model output with a metric and a 0-1 cut-off to approximate the likelihood ratio. This, as before, is equivalent to assuming  uniform error on a set defined by the metric and  the tolerance. As above, this algorithm can be generalized from assuming uniform measurement error to an arbitrary error term. Below, are two different algorithms to perform MCMC for the model described by Equation~(\ref{eqn:disc_model}). The difference between the two algorithms  is in the choice of sample space used  to construct the stationary distribution. In Algorithm C we consider the state variable to belong to the space of parameter values $\Theta$, and construct a  Markov chain  $\{\theta_1, \theta_2, \ldots\}$ which obeys the following dynamics:
\begin{enumerate}
\item[]{\bf Algorithm C: probabilistic approximate MCMC 1}
\item[C1] At time $t$, propose a move from $\theta_t$ to $\theta'$ according to transition kernel $q(\theta_t, \theta')$.
\item[C2] Simulate $X'\sim \eta(\theta')$.
\item[C3] Set $\theta_{t+1}=\theta'$ with probability 
\begin{equation}\label{eqn:C_accep_prob}
r(\theta_t, \theta'\mid X')=\frac{\pi_\epsilon(D-X')}{c} \min\left(1, \frac{q(\theta', \theta_t) \pi(\theta')}{q(\theta_t, \theta')\pi(\theta_t)}\right),
\end{equation}
 otherwise set $\theta_{t+1}=\theta_t$.
\end{enumerate}
An alternative approach is to introduce the value of the simulated output as an auxiliary variable and construct the Markov chain on the space $\Theta \times \mathcal{X}$, where $\mathcal{X}$ is the space of model outputs.

\begin{enumerate}
\item[]{\bf Algorithm D: probabilistic approximate MCMC 2}
\item[D1] At time $t$, propose a move from $\psi_t=(\theta_t, X_t)$ to $\psi'=(\theta', X')$ with $\theta'$ drawn from transition kernel $q(\theta_t, \theta')$, and $X'$ simulated from the model using $\theta'$:
 \[
 X'\sim \eta(\theta')
 \]
\item[D2] Set $\psi_{t+1}=(\theta', X')$ with probability 
\begin{equation}\label{eqn:D_accep_prob}
r((\theta_t, X_t), (\theta', X'))=\min\left(1,  \frac{\pi_\epsilon(D-X')q(\theta', \theta_t) \pi(\theta')}{\pi_\epsilon(D-X_t)q(\theta_t,\theta')\pi(\theta_t)}\right),
\end{equation}
 otherwise set $\psi_{t+1}=\psi_t$.
\end{enumerate}

\begin{proof}[of convergence]
To show that these Markov chains  converge to the required posterior distribution, it is sufficient to show that the chains satisfy the detailed balance equations
\begin{equation*}
\pi(s)p(s,t)=\pi(t)p(t,s) \quad\mbox{for all } s,t
\end{equation*} where $p(\cdot, \cdot)$ is the transition kernel of the chain and $\pi(\cdot)$ the required stationary distribution.

For Algorithm C the transition kernel is the product of $q(\theta, \theta')$ and the acceptance rate. To calculate the acceptance rate, note that in Equation~(\ref{eqn:C_accep_prob}) the acceptance probability is conditioned upon knowledge of $X'$ and so we must integrate out $X'$  to find $r(\theta, \theta')$. This gives the transition kernel for the chain:
\begin{equation*}
p(\theta_t, \theta')=q(\theta_t, \theta') \int_\mathcal{X} \frac{\pi_\epsilon(D-X')}{c} \min\left(1, \frac{q(\theta', \theta_t) \pi(\theta')}{q(\theta_t, \theta')\pi(\theta_t)}\right) \pi(X'\mid \theta') \d X'.
\end{equation*}
The target stationary distribution is 
\[
\pi(\theta\mid D)=\frac{\pi(\theta) \int_\mathcal{X} \pi_\epsilon(D-X) \pi(X\mid \theta) \d X}{\pi(D)}.
\]
It is then simple to show that the Markov chain described by Algorithm C satisfies the detailed balance equations (see \citet{Liu_01} for comparable calculations).

For Algorithm D, the transition kernel is 
\begin{equation}\label{eqn:D1}
p((\theta_t, X_t), (\theta', X'))=q(\theta_t, \theta') \pi(X' \mid \theta') \min\left(1,  \frac{\pi_\epsilon(D-X')q(\theta', \theta_t) \pi(\theta')}{\pi_\epsilon(D-X_t)q(\theta_t,\theta')\pi(\theta_t)}\right).
\end{equation}
The Markov chain in this case takes values in $\Theta \times \mathcal{X}$ and the required stationary distribution is
\begin{equation}\label{eqn:D2}
\pi(\theta, X \mid D)=\frac{\pi_\epsilon(D-X)\pi(X\mid \theta)\pi(\theta)}{\pi(D)}.
\end{equation}
It can then be shown that Equations~(\ref{eqn:D1}) and~(\ref{eqn:D2}) satisfy the detailed balance equations. 
\end{proof}
While Algorithm C is more recognisable as a generalization of the approximate MCMC algorithm given in \citet{Marjoram_etal03}, Algorithm D is likely to be more efficient in most cases. This is because the ratio of model error densities that occurs in acceptance rate~(\ref{eqn:D_accep_prob}) is likely to result in larger probabilities than those given by Equation~(\ref{eqn:C_accep_prob}) which simply has a $\pi_\epsilon(D-x)/c$ term instead. Algorithm D also has the advantage of not requiring a normalizing constant.

\section{Extensions}

\subsection{ Importance sampling}

Suppose our aim is to calculate expectations  of the form
\[
\BE(f(\theta)\mid D)=\int f(\theta) \pi(\theta\mid D) \d \theta
\]
where the expectation is taken with respect to the posterior distribution of $\theta$. The simplest way to approximate this is to draw a sample of $\theta$ values $\{\theta_i\}_{i=1, \ldots, n}$, from $\pi(\theta\mid D)$ using Algorithm B, C or D and then approximate using the sum $n^{-1} \sum f(\theta_i)$. However, a more stable estimator can be obtained by using draws from the  prior weighted by $\pi_\epsilon(D-X_i)$ as in Algorithm B. For each  $(\theta_i, X_i)$ pair drawn from the prior and simulator in steps B1 and B2, assign $\theta_i$ weight $w_i=\pi_\epsilon(D-X_i)$. Then an estimator of the required expectation is 
\[
\frac{\sum f(\theta_i)w_i}{\sum w_i}.
\]
This is an importance sampling algorithm targeting the joint distribution
$$\pi(X, \theta \mid D) \propto \pi_\epsilon (D-X) \pi(X\mid\theta) \pi(\theta)$$
using instrumental distribution $\pi(X\mid \theta) \pi(\theta)$. Note that if the uniform acceptance kernel 
$$\pi_\epsilon(D-X) \propto \mathbb{I}_{\rho(D,X)\leq \delta}$$
is used, then this approach reduces to the rejection algorithm, as proposals are given weight 1 (accepted) or 0 (rejected), showing that there is no uniform-ABC version of importance sampling.

Sequential Monte Carlo algorithms are possible however, and \citet{Sisson_etal07} and \citet{Toni_etal09} considered algorithms in which the tolerance $\delta$ is slowly reduced through a schedule $\delta_1, \ldots, \delta_T$ to some small final value. Both of these algorithms, as well as variants such as \citet{Drovandi_etal2011} and \citet{DelMoral_etal2012}, which use Metropolis-Hastings moves of the parameter between iterations to provide more efficient proposals, can be extended to the generalised ABC case using general acceptance kernels. The move from 0-1 cut-offs to general acceptance rates can introduce difficulties with memory constraints, due to the requirement to store a large number of particles, many with small but non-zero weights. Partial rejection control, introduced by \citet{Liu_01} and extended to an ABC setting by \citet{Peters_etal2012}, can be used to reject particles with small weights, only keeping particles which have weight above some threshold.



\subsection{Model selection}

The theoretical acceptance rate from the rejection algorithm (Algorithm A with $\delta=0$) is equal to the model evidence $\pi(D)$. The evidence from different models can then be used to calculate Bayes factors  which can be used to perform model selection \citep{Kass_etal95}. It is possible to approximate the value of $\pi(D)$ by using the acceptance rate from Algorithm A \citep{Wilkinson_07}. By doing this for two or more competing models,  we can perform approximate model selection, although in practice this approach can be unstable as $\delta$ varies \citep{Wilkinson_07}. The estimate of $\pi(D)$ can be improved and made interpretable by using the weighted estimate
\[
\frac{1}{n}\sum_{i=1}^n \frac{1}{m}\sum_{j=1}^m \pi_\epsilon(D-X_i^j)
\]
where $X_i^1, \ldots, X_i^m\sim \eta(\theta_i)$ and $\theta_1, \ldots, \theta_n \sim \pi(\cdot)$. This gives a more stable estimator than simply taking the acceptance rate, and  also tends to the exact value (as $n, m\rightarrow \infty$) for the model given by Equation~(\ref{eqn:disc_model}).

\citet{Robert_etal2011} and \citet{Didelot_etal2011} have highlighted the dangers of using ABC algorithms for model selection when $D$ and $X$ are replaced by summaries $S(D)$ and $S(X)$. The Bayes factor based on the full data $D$, will in general differ from the Bayes factor based on the summary $S(D)$, even when $S$ is a sufficient statistic for $\theta$ for both  simulators.

The approach advocated here, of considering ABC as an extension of the modelling process, using acceptance kernel $\pi_\epsilon(D -X)$ (or $\pi_\epsilon(S(D) | S(X))$ in a more general non-additive setting) to represent the relationship between simulator output and observations (as encapsulated by Theorem \ref{thm:1}), suggests a different approach. The choice of acceptance kernel should be made after careful consideration of the simulator's ability, and inevitably involves a degree of subjective judgement (as does the choice of simulator, prior, and summary statistics). The kernel used forms part of the statistical model, and any model selection scheme will assess this choice, as well as the choice of simulator and prior. In other words, it is inevitable that the estimated Bayes factor will depend upon $\pi_\epsilon$ in general, further highlighting the need for its careful design.

Similarly, the choice of summary statistic $S(D)$ used to reduce the dimension of the data and simulator output, should be based on careful consideration of what aspects of the data we expect the simulator to be able to reproduce. Recent work by \citet{Nunes_etal2010, Fearnhead_etal2012, Barnes_etal2012} and \citet{Prangle_etal2013} has focussed on automated methods for choosing good summary statistics, but care should be taken to ensure the summaries selected coincide with the modeller's expectations of what the simulator can reproduce. Examples can be constructed in which summaries are strongly informative about the parameters (in the sense of $\pi(\theta |S(D))$ differing from $\pi(\theta)$), but which do not produce believable posteriors. For example, in dynamical system models, phase sensitive summaries (such as the sum of square differences) are usually informative about the simulator parameters, even though the simulators were only designed to capture the phase-insensitive parts of the system. Using these summaries will give the appearance of having learned about the parameters, as the posterior will differ from the prior, but it is unclear whether what has been learnt is of value.  If the summary $S(D)$ is chosen on a sound physical basis, and the inference viewed as conditional upon this choice (i.e., the posterior is taken to be $\pi(\theta | S(D))$ and is not seen as an approximation to $\pi(\theta |D)$), then the difficulties for ABC model selection raised by \cite{Robert_etal2011} are circumvented, and interpretation is clear.

\section{Discussion}

It has been shown in this  paper  that approximate Bayesian computation algorithms  can be considered to give exact inference under the assumption of model error.
However, this is only part of the way towards a  complete understanding of  ABC algorithms. In the majority of the application papers using ABC methods, summaries of the data and model output have been used to reduce the dimension of the output. It cannot be known whether these summaries are sufficient for the data, and so in most cases the use of summaries means that there is  another layer of approximation. While this work allows us to understand the error assumed on the measurement of the summary, it says nothing about what effect using the summary rather than the complete data has on the inference. 

The use of a simulator discrepancy  term when making inferences is important if one wants to move from making statements about the simulator  to statements about reality. There has currently been only minimal work done on modelling the discrepancy term for stochastic models. One way to approach this is  to view the model as deterministic, outputting a density $\pi_\theta(x)$ for each value of the input $\theta$ (many realizations of $\eta(\theta)$ would be needed to learn $\pi_\theta(x)$).  The discrepancy term $\epsilon$ can then be considered as representing the difference between $\pi_\theta(x)$ and the true variability inherent in the physical system. 


\section*{Acknowledgements}

I would like to thank Professor Paul Blackwell for his suggestion that the metric in the ABC algorithm might have a probabilistic interpretation. I would also like to thank Professor Simon Tavar\'e, Professor Tony O'Hagan, and Dr Jeremy Oakley for  useful discussions on an early draft of the manuscript.

\bibliographystyle{elsarticle-harv}
\bibliography{ABCisExact}

\end{document}